\theoremstyle{definition}
\newtheorem{theorem}{Theorem}[section]
\newtheorem{definition}[theorem]{Definition}
\newtheorem{lemma}[theorem]{Lemma}
\newtheorem{proposition}[theorem]{Proposition}
\newcommand*{\argmax}{\operatornamewithlimits{argmax}\limits}
\newcommand*{\argmin}{\operatornamewithlimits{argmin}\limits}
\newcommand{\BE}[1]{\mathbb{E}\left({#1}\right)}
\newcommand{\R}{\mathcal{R}}
\newcommand{\norm}[1]{\left\lVert#1\right\rVert}
\newcommand{\mud}{\bar{\mu}}
\newcommand{\sid}{\bar{\sigma}}
\newcommand{\setR}{\mathbb{R}}
\numberwithin{equation}{section}
\date{\today}
\begin{document}
\title{ Risk management under Omega measure}
\maketitle

\begin{center}
\author{Michael R. Metel \footnote{Laboratoire de Recherche en Informatique, Universit\'e Paris-Sud, Orsay, France, metel@lri.fr; Corresponding author}  \and Traian A.~Pirvu\footnote{Department of Mathematics and Statistics, McMaster University, 1280 Main Street West, Hamilton, ON, L8S 4K1, Canada, tpirvu@math.mcmaster.ca} \and Julian Wong\footnote{Department of Mathematics and Statistics, McMaster University, 1280 Main Street West, Hamilton, ON, L8S 4K1, Canada, julianwwong@gmail.com}}
\end{center}
\date{}

\begin{abstract}
We prove that the Omega measure, which considers all moments when assessing portfolio performance, is equivalent to the widely used Sharpe ratio under jointly elliptic distributions of returns. Portfolio optimization of the Sharpe ratio is then explored, with an active-set algorithm presented for markets prohibiting short sales. When asymmetric returns are considered we show that the Omega measure and Sharpe ratio lead to different optimal portfolios.
\end{abstract}

Keywords: risk management, portfolio optimization, Omega measure, Sharpe ratio, active-set algorithm, non-convex optimization.
\vspace{0.5cm}

 {\bf{Acknowledgements}} This work is supported by NSERC grant 371653-09 and by the Digiteo Chair C\&O program. 

\section{Introduction}
In the modern world of finance and insurance, it is routine for investors, firms and companies to manage different financial/insurance assets in the hope of increasing their capital gain. The collection of such investments is known as a portfolio, and it is designed to match the investor's preference. Different compositions of varying assets allow for a diversity of combinations that suit distinct appetites. For example, a bulge bracket investment bank such as J.P. Morgan is willing to undertake more risk to compensate for a larger return, in comparison to a retiree who is overseeing his retirement fund. However, despite an individual's taste, investors face the challenge of balancing reward and risk, as a high reward investment is often tightly linked with high underlying risk, and thus the main goal of portfolio management is finding the optimal tradeoff between the two.\\

The mean-variance portfolio model, proposed by Harry Markowitz~\cite{Marko} serves as the keystone to portfolio theory. He formalized the problem of a rational, risk adverse investor that faces the tradeoff between reward and risk as proposed above. In such a scenario, reward and risk are defined as the expected return from the portfolio and its variance. There are problems with the implementation of the Markowitz model when the universe of assets is large.
In this situation the assets' sample covariance matrix is not an efficient estimator of the assets' true covariance matrix. Therefore, using the sample mean and covariance matrix in the mean-variance optimization procedure will result in an optimal return estimate different from reality. A fix for this problem
is proposed in \cite{Bai}, by using the theory of the large-dimensional random matrix. Another reason for the poor performance of the optimal mean-variance portfolio is perhaps due to the symmetry of asset returns. \cite{Low1} shows that it is possible to enhance mean-variance portfolio selection by allowing for distributional asymmetries. Portfolio optimization under skewed returns is performed in several papers such as in \cite{Hu} and \cite{Low2}.\\

Under the mean-variance framework, various major portfolio theories have sprouted, and one of the major developments proposed by William F. Sharpe~\cite{Sharpe} is known as the Sharpe ratio. The Sharpe ratio is the most fundamental of performance measures, which are critical in the evaluation, management and trading of portfolios. Under the mean-variance portfolio framework, the Sharpe ratio compares the return of the portfolio with the risk-free interest rate, which serves as a significant benchmark, owing to the fact that if overall return of the portfolio ranks below the risk free rate, investors should put their capital in the money market and earn interest without bearing any risk.
The Sharpe ratio is greatly incorporated as a modern investment strategy, and is highly appraised by investors. However, the Sharpe ratio only comprises and examines the first two moments of the return distribution, namely the expected return and the variance in return, while distribution properties such as skewness and kurtosis, which measure asymmetry and thickness of the tail distribution at the third and fourth moments respectively, may profoundly impact the performance of the portfolio. \cite{{DeMiguel}} compares the optimal mean-variance portfolio with the naive $\frac{1}{N}$ portfolio. They found that the $\frac{1}{N}$ rule performs better than the optimal mean-variance portfolio in terms of the  Sharpe ratio, indicating that the gain from optimal diversification is higher when compared to the offset produced by estimation error.\\

The failure of the Sharpe ratio to address higher moments motivated Shadwick and Keating~\cite{W} to develop the Omega measure, which captures all moments of the return distribution, including the expected value and variance. The Omega measure serves as a universal performance measure as it can be applied to any portfolio that follows a well-defined return distribution.\\

Even though the Omega measure was developed over 10 years ago, little research has been done to address its compatibility with previous developments, namely with distribution functions that only involve lower moments. This paper aims to explore and address the backward compatibility of the Omega measure. We consider a market (financial or insurance) encompassing several risks within a one period paradigm. The risks are first assumed to follow a jointly elliptical distribution. Under this framework we prove that the Sharpe ratio and the Omega measure yield the same optimal portfolios. Next, Sharpe ratio portfolio optimization is explored. The quasi-concavity of the Sharpe ratio is employed to develop an active-set algorithm for markets banning short sales. The convergence of this algorithm is established and numerical results are presented. Moreover, we show that in a model with asymmetric returns the optimal Sharpe ratio portfolio fails to be optimal when Omega measure is considered.\\

The remainder of this paper  is organized as follows: In Section 2 we present the model. Section 3 provides the Sharpe ratio and Omega measure equivalence
within the class of elliptical distributions of returns. Portfolio optimization formulations are presented in Section 4. Numerical analysis is performed in Section 5, with numerical results displayed in Section 6. Section 7 presents a model with asymmetric returns. The conclusion is summarized in Section 8. The paper ends with an Appendix containing the proofs.

\section{The Model}

We have a market (financial or insurance) model which encompasses several instruments denoted $S_{1},...,S_{n}.$
We consider a single period model from time $t=0$ to $t=1$. For each instrument, let the arithmetic return be
\begin{equation}
R_{i}=\frac{S_{i}(1)-S_{i}(0)}{S_{i}(0)},\nonumber
\end{equation}
and
$$ \R=(R_{1}, R_{2},\cdots, R_{n}). $$
We assume the return of the portfolio follows an \emph{elliptically symmetric} distribution. Then the vector of means $E(\R)=\mu=(\mu_1,...,\mu_n)^T$  and the $n \times n$ covariance matrix $Cov(\R)=\Sigma=(\sigma_{ij})_{i,j}$ exist, and we further assume that $\Sigma$ is invertible. The density $f,$ if it exists, is
$$f(x)=|\Sigma|^{-\frac{1}{2}}g[(x-\mu)^T\Sigma^{-1}(x-\mu)],$$
where $x\in \setR^n$ and $g:\setR^+\rightarrow\setR^+$ is called the \emph{density generator} or \emph{shape} of $R$, and we write
 $$ \R\sim EC_{n} (\mu, \Sigma;g),  $$
where $(\mu,\Sigma)$ is called the parametric part and $g$ is called the non-parametric part of the elliptical distribution. The characteristic function $\psi$ of $R$ is
\begin{equation}\label{cf}
 \psi_{\R}  (t)=E \exp{(i t^T \R)}=  \exp{(i t^{T} \mu)} \phi(t^{T} \Sigma t),
 \end{equation}
for some scalar function $\phi$, called the \emph{characteristic generator}. For background on the elliptically symmetric distribution, which is also called elliptically countered, see \cite{fang}, and \cite{{bingham2002semi}}.\\

The class of elliptical distributions, which have densities and defined mean and covariance is rich enough to contain several common distributions of asset returns: the multivariate normal distribution, the multivariate $t$ distribution, normal-variance mixture distributions, symmetric stable distributions, the symmetric generalized hyperbolic distribution, the symmetric variance-gamma distribution, and the multivariate exponential power family (and thus the Laplace distribution).
One advantage of this class is that the non-parametric part $g$ ''escapes the curse of dimensionality'' cf \cite{{bingham2002semi}}.
This class is chosen to model the stock returns by \cite{Chamberlain}, \cite{Owen}, and \cite{{bingham2003semi}}.\\

Elliptical distributions are appealing for portfolio analysis, since it is a closed class under linear combinations. A portfolio at times $t=0$  and $t=1$ will respectively be
\begin{align*}
X(0) &= \Delta _{1}S_{1}(0)+\cdots +\Delta _{n}S_{n}(0)\\
X(1) &= \Delta _{1}S_{1}(1)+\cdots +\Delta _{n}S_{n}(1)
\end{align*}

Let the arithmetic return of the portfolio be
$$
R=\frac{X(1)-X(0)}{X(0)}.$$
The following Lemma gives the distribution of $R.$
\begin{lemma}\label{L1}
Let $$w_i = \frac{\Delta _iS_i(0)}{\Delta _1S_1(0)+\cdots +\Delta _nS_n(0)}$$ be the proportion of the initial wealth invested in instrument i,
and $w$ be the vector with components $w_i$. Then $R$ follows an elliptical distribution
\begin{equation}
R \sim  EC_{1} (\mud, \sid; g)\nonumber
\end{equation}
where
\begin{equation}\label{eue}
\mud:= w \cdot \mu =\sum_{i=1}^n  w_i \mu_i,\quad \sid^2:=w^T\Sigma \, w=\sum_{i=1}^n \sum_{j=1}^n w_i w_j \sigma_{ij}.
\end{equation}

\end{lemma}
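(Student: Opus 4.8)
The plan is to reduce the portfolio return to a linear functional of the return vector $\R$ and then invoke the closure of elliptical distributions under linear combinations by a direct computation with the characteristic function \eqref{cf}. First I would rewrite $R$ in terms of the component returns. Using $S_i(1)-S_i(0)=R_i S_i(0)$, the portfolio return becomes
\begin{equation}
R=\frac{\sum_{i=1}^n \Delta_i\bigl(S_i(1)-S_i(0)\bigr)}{\sum_{j=1}^n \Delta_j S_j(0)}=\sum_{i=1}^n \frac{\Delta_i S_i(0)}{\sum_{j=1}^n \Delta_j S_j(0)}\,R_i=\sum_{i=1}^n w_i R_i = w^T\R,\nonumber
\end{equation}
so that $R$ is precisely the linear combination $w^T\R$ with the weights $w_i$ defined in the statement.

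Next I would compute the characteristic function of the scalar random variable $R$. For every $s\in\setR$,
\begin{equation}
\psi_R(s)=E\exp(i s R)=E\exp\bigl(i s\,w^T\R\bigr)=E\exp\bigl(i (sw)^T\R\bigr)=\psi_{\R}(sw).\nonumber
\end{equation}
Substituting $t=sw$ into \eqref{cf} and using $\mud=w^T\mu$ and $\sid^2=w^T\Sigma w$ from \eqref{eue},
\begin{equation}
\psi_R(s)=\exp\bigl(i (sw)^T\mu\bigr)\,\phi\bigl((sw)^T\Sigma(sw)\bigr)=\exp(i s\,\mud)\,\phi(s^2\sid^2).\nonumber
\end{equation}
This is exactly the characteristic function of a one-dimensional elliptical law with parametric part $(\mud,\sid)$ and the same characteristic generator $\phi$, i.e. $R\sim EC_1(\mud,\sid;g)$, which is the claim.

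The computation is essentially routine once the reduction $R=w^T\R$ is in place; the only point needing care is that closure under linear combinations leaves the non-parametric part untouched. The univariate law inherits the identical characteristic generator $\phi$, hence the identical shape $g$, and only the parametric pair $(\mu,\Sigma)$ is transformed into $(\mud,\sid^2)$. One should also observe that $\sid^2=w^T\Sigma w>0$ because $\Sigma$ is a positive-definite (invertible) covariance matrix, so that $\sid$ is a bona fide scale parameter and the one-dimensional elliptical family is well defined.
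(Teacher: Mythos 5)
Your proposal is correct and follows essentially the same route as the paper: both reduce the portfolio return to the linear combination $R=w^T\R$ and then appeal to the closure of elliptical distributions under linear combinations, with the only difference being that you verify this closure explicitly via the characteristic function \eqref{cf} while the paper simply cites it as a known property. This added detail is fine (and matches the paper's convention of keeping the same generator symbol $g$ for the one-dimensional law), so no gap exists.
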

\begin{proof}
See The Appendix
\end{proof}

Let us consider the Sharpe Ratio and Omega measure defined by the formal definitions.
\begin{definition}
The Sharpe ratio of a portfolio with return R is defined as \begin{equation}
S(R) = \frac{\mud-r_f}{\sid}\nonumber\end{equation}where $\mud$ is the expected return of the portfolio, $\sid$ is the standard deviation of return, and $r_f$ is the risk-free interest rate.
\end{definition}
\begin{definition}
The Omega measure of a portfolio with return R is defined as
\begin{equation} \Omega (R ) = \frac{\int_{L}^{\infty}(1-F(x))dx}{\int_{-\infty }^{L}F(x)dx}, \nonumber\end{equation}
where $F(x)$ is the cumulative distribution function of the return distribution R, and $L$ is an exogenously satisfied benchmark index. \end{definition}

The intuition behind the Omega measure is simple; by selecting a benchmark $L$, which serves as a reference that our portfolio is aiming to beat, the Omega measure compares the area of the cumulative distribution function from the right of L to the area to the left of L. Under such a definition, the Omega measure encompasses the entire return distribution, therefore incorporating higher moment properties as discussed.

\section{Sharpe Ratio and Omega Measure Equivalence}

When holding a portfolio, an investor uses a performance measure such as the Sharpe ratio or the Omega measure to evaluate how well the portfolio is performing. Hence it is a natural question to ask how one should distribute his wealth in order to maximize his portfolio under the Omega measure. The following theorem states that using the Sharpe ratio or the Omega measure to optimize portfolio performance leads to the same optimal portfolio within the class of elliptical distributions of returns.

\begin{theorem}\label{main}
Recall that under our framework the portfolio return $R$ is elliptically distributed $R \sim EC_{1} (\mud, \sid; g)$. If
$r_f = L$ we claim that $$\underset{w_1,..,w_n}{\max}\Omega(R)$$ is equivalent to $$\underset{w_1,..,w_n}{\max}S(R).$$
\end{theorem}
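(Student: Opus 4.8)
The plan is to exploit the one-dimensional elliptical representation supplied by Lemma~\ref{L1} in order to collapse both performance measures into functions of a single scalar, and then to compare them through an elementary monotonicity argument. First I would use that $R \sim EC_{1}(\mud,\sid;g)$ admits the location--scale representation $R = \mud + \sid Z$, where $Z \sim EC_{1}(0,1;g)$ is the standardized return whose cumulative distribution function $F_Z$ depends only on the shape $g$ (here $\sid>0$ since $\Sigma$ is invertible and the portfolio is nondegenerate). The crucial structural fact, inherited from Lemma~\ref{L1}, is that $g$ is the \emph{same} non-parametric part for every choice of weights $w$; consequently $F_Z$ is a fixed function and the portfolio enters only through the parametric pair $(\mud,\sid)$.

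Next I would rewrite the Omega measure. Writing $F(x) = F_Z\big((x-\mud)/\sid\big)$ and substituting $x = \mud + \sid z$ in both integrals, the scale factor $\sid$ cancels between numerator and denominator, yielding
\[
\Omega(R) \;=\; \frac{\int_{t}^{\infty}\big(1 - F_Z(z)\big)\,dz}{\int_{-\infty}^{t} F_Z(z)\,dz} \;=:\; h(t), \qquad t := \frac{L-\mud}{\sid}.
\]
The two integrals equal $\mathbb{E}[(R-L)^{+}]$ and $\mathbb{E}[(L-R)^{+}]$ respectively, which are finite by the assumed existence of the mean, so $h$ is well defined. When $r_f = L$ the Sharpe ratio becomes $S(R) = (\mud - r_f)/\sid = (\mud - L)/\sid = -t$, so both measures depend on $w$ only through the single scalar $t = t(w)$, and in fact $\Omega(R) = h\big(-S(R)\big)$.

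I would then show that $h$ is strictly decreasing on the interior of the support. With $N(t) = \int_{t}^{\infty}(1-F_Z)\,dz$ and $D(t) = \int_{-\infty}^{t} F_Z\,dz$ one has $N'(t) = -(1-F_Z(t))$ and $D'(t) = F_Z(t)$, whence
\[
h'(t) \;=\; -\,\frac{\big(1-F_Z(t)\big)\,D(t) + N(t)\,F_Z(t)}{D(t)^{2}} \;<\; 0,
\]
since $F_Z(t)\in(0,1)$ and $N(t),D(t)>0$ there. Therefore $\Omega(R) = h(-S(R))$ is a strictly increasing function of $S(R)$, and maximizing $\Omega$ over $w_1,\dots,w_n$ produces exactly the same maximizer as maximizing $S$, which is the asserted equivalence.

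The main obstacle I anticipate is not the computation but the justification of the two structural facts that make the reduction legitimate: that the shape $g$ is common to all admissible portfolios (so that $F_Z$ is portfolio-independent and $\Omega$ truly depends on $w$ through $t$ alone), and that the defining integrals of $\Omega$ are finite so that $h$ makes sense. Once these are in place, the strict monotonicity of $h$ and hence the coincidence of the two optimizers is routine.
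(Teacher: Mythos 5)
Your proof is correct, and at the strategic level it matches the paper's: both arguments collapse $\Omega$ and $S$ into functions of the single scalar $z=(L-\mud)/\sid$ (your $t$) and then show that $\Omega$ is decreasing in that scalar, so that maximizing $\Omega$ amounts to maximizing $(\mud-L)/\sid=S(R)$ when $r_f=L$. The route to the monotonicity, however, is genuinely different. The paper applies Fubini's theorem to the double integrals defining $\Omega$, inserts the elliptical density $f(r)=\frac{1}{\sid}\,g\bigl(\bigl(\tfrac{r-\mud}{\sid}\bigr)^2\bigr)$, and after a change of variables writes $\Omega=G(z)$ in terms of antiderivatives $H_1,H_2$ of the generator and a constant $K$, then differentiates this explicit formula. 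You instead never leave the level of the CDF: the location--scale representation $R=\mud+\sid Z$ makes $F_Z$ portfolio-independent (the structural fact both proofs need from Lemma~\ref{L1}), the scale $\sid$ cancels in the ratio, and monotonicity follows from a one-line quotient rule using $N'(t)=-(1-F_Z(t))$ and $D'(t)=F_Z(t)$. Your version buys several things: it does not require the density of $R$ to exist, only its CDF and a finite mean; it settles the finiteness of the defining integrals by identifying $N$ and $D$ with $\BE{(R-L)^+}$ and $\BE{(L-R)^+}$, an identity the paper proves separately as Proposition~\ref{skew} and which your argument recovers as a byproduct; it yields \emph{strict} monotonicity $h'(t)<0$, which is really what one wants for the two argmax sets to coincide, whereas the paper records only $G'(z)\le 0$; and it sidesteps the antiderivative bookkeeping, where the paper's displayed expressions in fact drop the boundary term $\tfrac{1}{2}H_1(\infty)$ (or, equivalently, use a sign convention for $H_1$ inconsistent with its later definition), though the paper's conclusion survives a corrected computation. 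What the paper's heavier calculation buys in exchange is an explicit closed form for $\Omega$ in terms of the density generator, which your argument does not produce. The caveats you flag (interior of the support, so $F_Z(t)\in(0,1)$ and $D(t)>0$) are genuine but are implicitly assumed in the paper's proof as well, via the positivity of $g$.
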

\begin{proof}
See The Appendix
\end{proof}

\section{Portfolio Optimization}

Given Theorem \ref{main}, we are able to transform optimization problems of the Omega measure into optimization problems of the Sharpe ratio for elliptical distributions. Let $e=\mu-L$, the excess expected return above a selected benchmark index $L$. With no restrictions on short selling, our optimization problem is as follows.

\begin{alignat}{6}
\label{4.1}
\tag{4.1}
&\max&&\text{ }\frac{w^Te}{\sqrt{w^T\Sigma w}}\nonumber\\
&\mbox{s.t. }&&\sum_{i=1}^{n}w_i=1\nonumber
\end{alignat}

However, certain financial markets prohibit the act of short selling, especially during periods of financial upheaval. An example would be the U.S. securities market under the 2008 financial crisis, when the U.S. Securities and Exchange Commission prohibited the act of short selling to protect the integrity of the securities market. Hence we are also interested in the following problem as well.

\begin{alignat}{6}
\label{4.2}
\tag{4.2}
&\max&&\text{ }\frac{w^Te}{\sqrt{w^T\Sigma w}}\nonumber\\
&\mbox{s.t. }&&\sum_{i=1}^{n}w_i=1\nonumber\\
&&&w_i\geq 0\hspace{25 pt}i=1,...,n\nonumber
\end{alignat}

\section{Numerical Analysis}

The optimal solution to (\ref{4.1}) can be found directly as described in the following proposition.

\begin{proposition}\label{ecs}
The optimal solution to (\ref{4.1}) is
$w^*=\frac{\hat{w}}{\sum_i^n \hat{w}_i}$, where $\hat{w}=\Sigma^{-1}e$.
\end{proposition}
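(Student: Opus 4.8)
The plan is to exploit the fact that the objective in (\ref{4.1}) is positively homogeneous of degree zero in $w$: replacing $w$ by $cw$ for any $c>0$ leaves $\frac{w^Te}{\sqrt{w^T\Sigma w}}$ unchanged. Consequently the value of the Sharpe ratio depends only on the \emph{direction} of $w$, and the linear constraint $\sum_i w_i=1$ plays no role in selecting the optimal direction; it merely pins down a single representative on the optimal ray. So I would first discard the constraint, maximize over directions, and only at the end rescale the maximizer to satisfy $\sum_i w_i=1$.

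To locate the optimal direction I would avoid Lagrange multipliers (which are degenerate here, since the gradient of a $0$-homogeneous function is orthogonal to $w$) and instead argue by a $\Sigma$-weighted Cauchy--Schwarz inequality. Since $\Sigma$ is symmetric positive definite it admits a positive definite square root $\Sigma^{1/2}$, and for every $w\neq 0$,
\begin{equation}
w^Te=(\Sigma^{1/2}w)^T(\Sigma^{-1/2}e)\le\norm{\Sigma^{1/2}w}\,\norm{\Sigma^{-1/2}e}=\sqrt{w^T\Sigma w}\,\sqrt{e^T\Sigma^{-1}e},\nonumber
\end{equation}
so that $\frac{w^Te}{\sqrt{w^T\Sigma w}}\le\sqrt{e^T\Sigma^{-1}e}$. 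Equality holds precisely when $\Sigma^{1/2}w$ is a nonnegative multiple of $\Sigma^{-1/2}e$, i.e. when $w=c\,\Sigma^{-1}e=c\,\hat w$ for some $c>0$. This identifies $\hat w=\Sigma^{-1}e$ as the unique maximizing direction and $\sqrt{e^T\Sigma^{-1}e}$ as the optimal Sharpe ratio. I would record for sign bookkeeping that the opposite ray ($c<0$) instead attains the minimum $-\sqrt{e^T\Sigma^{-1}e}$.

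Finally I would reintroduce the budget constraint by scaling $\hat w$ onto the hyperplane $\{\sum_i w_i=1\}$: the unique positive multiple of $\hat w$ whose components sum to one is $w^*=\hat w/\sum_{i=1}^n\hat w_i$, which is exactly the claimed optimizer.

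The main obstacle I anticipate is the sign of $\sum_i\hat w_i$. The rescaling is legitimate as a \emph{maximizer} only when $\sum_i\hat w_i>0$, for otherwise $w^*$ would be a negative multiple of $\hat w$ and would in fact attain the minimum rather than the maximum. Geometrically, the points $w$ with $\sum_i w_i=1$ realize exactly the directions $u$ with $\mathbf{1}^Tu>0$; the global optimal direction $\hat w$ lies among them iff $\mathbf{1}^T\hat w=\sum_i\hat w_i>0$, and when this fails the supremum over the constraint set is only approached as $\norm{w}\to\infty$ and is not attained. I would therefore either record $\sum_i\hat w_i>0$ as a standing hypothesis (natural here, e.g. when the excess returns $e$ make $\Sigma^{-1}e$ point into the positive-sum half-space) or verify it from the data before asserting that $w^*$ solves (\ref{4.1}).
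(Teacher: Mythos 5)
Your proof is correct and takes essentially the same route as the paper's: both rest on the extended Cauchy--Schwarz inequality (which you re-derive via the factorization $\Sigma^{1/2}$ rather than citing it) to identify $\hat w=\Sigma^{-1}e$ as the unique optimal direction, followed by rescaling onto the budget hyperplane. Your closing caveat is a genuine refinement rather than a defect: the paper's own proof silently assumes $\sum_i\hat w_i>0$, and, as you observe, when $\sum_i\hat w_i\le 0$ the normalized point $w^*$ either attains the minimum value $-\sqrt{e^T\Sigma^{-1}e}$ or is undefined, and the supremum of (\ref{4.1}) is then not attained at any feasible $w$.
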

\begin{proof}
See The Appendix
\end{proof}

We require the following properties of the Sharpe ratio in developing an algorithm for solving (\ref{4.2}).

\begin{proposition}\label{quasiconcave}
The Sharpe ratio $S(w)=\frac{w^Te}{\sqrt{w^T\Sigma w}}$ is a quasi-concave function and $\nabla S(w)= 0$ iff $w=c \Sigma^{-1}e$ for some $c\neq 0$.
\end{proposition}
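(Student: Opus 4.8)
The plan is to prove the two assertions separately, doing the critical-point characterization by a direct differentiation and reserving the real care for quasi-concavity. Before either step I would record the structural facts that make $S$ well behaved: since $\Sigma$ is an invertible covariance matrix it is symmetric positive definite, so $w^T\Sigma w>0$ for every $w\neq 0$ and $\|w\|_\Sigma:=\sqrt{w^T\Sigma w}$ is a genuine norm. Hence $S(w)=\frac{w^Te}{\|w\|_\Sigma}$ is smooth and well defined away from the origin, and (under the standing assumption $e\neq 0$) is positively homogeneous of degree $0$. The positive-definiteness is what turns the denominator into a convex positive function, which is the hinge of the whole argument.

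For quasi-concavity I would argue through superlevel sets: $S$ is quasi-concave on a convex domain $D$ iff $\{w\in D:S(w)\geq\alpha\}$ is convex for every $\alpha$. On the relevant domain $D=\{w:w^Te>0\}$ — where all competitive portfolios, and in particular the optimizers of \eqref{4.1}--\eqref{4.2}, sit — the inequality $S(w)\geq\alpha$ is equivalent to $\alpha\|w\|_\Sigma-w^Te\leq 0$. For $\alpha\geq 0$ the left-hand side is a convex function of $w$ (a nonnegative multiple of a norm minus a linear form), so its sublevel set is convex; for $\alpha<0$ the constraint $w^Te>0$ already forces $S(w)>0>\alpha$, so the superlevel set is all of $D$. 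In both cases the superlevel sets are convex, giving quasi-concavity. A clean way to see the geometry, which I would use to organize the write-up, is the convexity-preserving linear substitution $u=\Sigma^{1/2}w$: it turns $S$ into $\frac{u^T\tilde e}{\|u\|_2}$ with $\tilde e=\Sigma^{-1/2}e$, so that up to the positive constant $\|\tilde e\|_2$ the Sharpe ratio is the cosine of the Euclidean angle between $u$ and the fixed vector $\tilde e$, and the superlevel sets become the convex ``ice-cream'' cones of directions making an angle at most $\arccos(\cdot)$ with $\tilde e$, valid precisely while that angle stays below $\pi/2$, i.e. on $\{u^T\tilde e>0\}$.

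For the critical-point characterization I would simply differentiate. Writing $N=w^Te$ and $D=\|w\|_\Sigma$, and using $\nabla D=\Sigma w/D$, the quotient rule gives
\begin{equation}
\nabla S(w)=\frac{e}{D}-\frac{N}{D^3}\,\Sigma w,\nonumber
\end{equation}
so $\nabla S(w)=0$ is equivalent to $e=\frac{N}{D^2}\Sigma w$. If $N=w^Te=0$ this would force $e=0$, contradicting $e\neq 0$; hence at any critical point $N\neq 0$, and setting $c=D^2/N\neq 0$ yields $w=c\,\Sigma^{-1}e$. Conversely, substituting $w=c\Sigma^{-1}e$ (so $\Sigma w=ce$) gives $N/D^2=1/c$ and therefore $\nabla S(w)=\frac{1}{D}\bigl(e-\tfrac{1}{c}\,ce\bigr)=0$, which closes the equivalence.

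The differentiation and the identification $w=c\Sigma^{-1}e$ are routine; the genuinely delicate point is quasi-concavity, because $S$ is \emph{not} quasi-concave on all of $\setR^n\setminus\{0\}$ — antipodal directions produce the classical failure of the cosine to be quasi-concave once the angle exceeds $\pi/2$. The argument therefore depends on correctly fixing the domain on which the statement is intended and true: the half-space $\{w^Te>0\}$, or, more to the point for the algorithm, any convex set avoiding the origin such as the feasible hyperplane $\sum_i w_i=1$ of \eqref{4.1}. On such a set one can alternatively verify quasi-concavity by restricting $S$ to an arbitrary line $w(t)=\alpha+t\beta$ not through the origin, where $S(t)=\frac{a+bt}{\sqrt{c_0+c_1t+c_2t^2}}$ with $c_2=\|\beta\|_\Sigma^2>0$; a short computation shows the numerator of $\tfrac{d}{dt}S(t)^2$ factors as $(a+bt)$ times an \emph{affine} function of $t$, so $S$ has a single interior stationary point and is unimodal along every such line, hence quasi-concave there. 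I would make this domain restriction explicit, since it is exactly what the active-set method for \eqref{4.2} exploits, namely that every local maximizer on the simplex is global.
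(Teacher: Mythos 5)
You are correct, and your core route is the same as the paper's: quasi-concavity via convexity of the superlevel sets $\{w : w^Te \geq \alpha\sqrt{w^T\Sigma w}\}$, and the critical-point characterization via the quotient rule. The difference is one of care, and it favors you. The paper disposes of quasi-concavity in one sentence --- the upper level sets ``form second order conic constraints, which define convex regions'' --- which is literally true only for levels $\alpha \geq 0$; as you observe, $S$ is \emph{not} quasi-concave on all of $\setR^n\setminus\{0\}$, since for $-\sqrt{e^T\Sigma^{-1}e} < \alpha < 0$ the superlevel set becomes (after the substitution $u = \Sigma^{1/2}w$) a circular cone of half-angle exceeding $\pi/2$, which is non-convex. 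Your restriction to the half-space $\{w : w^Te > 0\}$, with the split into $\alpha \geq 0$ (second-order cone) and $\alpha < 0$ (whole domain), is exactly what is needed to make the proposition true as stated, and it costs nothing downstream, since under the paper's standing assumption that some $e_i>0$ the optimizers of (\ref{4.1}) and (\ref{4.2}) lie in that half-space, which is also where the Arrow--Enthoven theorem is applied. Likewise, in the gradient step the paper writes $c = \frac{w^T\Sigma w}{w^Te}$ without noting that $w^Te$ could vanish; your remark that $w^Te = 0$ at a critical point would force $e = 0$ closes that small gap. One caution on your alternative line-restriction sketch: along a ray inside the half-space $S$ may be monotone with no interior stationary point, so the claim of ``a single interior stationary point'' should be weakened to ``the derivative of $S^2$ changes sign at most once, and only from positive to negative''; since you lead with the superlevel-set argument, this aside does not affect the validity of your proof.
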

\begin{proof}
See The Appendix
\end{proof}

If $\Sigma^{-1}e\geq 0$ then our optimal solution for (\ref{4.1}) is also optimal for (\ref{4.2}), so let us assume that for (\ref{4.2}), our optimal solution $w^*\neq c \Sigma^{-1}e$ for any $c$. By our assumption, $\nabla S(w^*)\neq 0$ and the following theorem is applicable.

\begin{theorem}[Arrow \& Enthroven \cite{Arrow1961}]
Let $f(x)$ be a differentiable quasi-concave function subject to non-negativity constraints. If $\nabla f(x^*)\neq 0$ and $x^*$ satisfies the KKT conditions with constants $\mu^*$, then it is a global optimal solution.
\end{theorem}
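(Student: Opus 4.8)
\emph{Proof proposal.} The plan is to combine the first-order characterization of differentiable quasi-concave functions with the gradient information supplied by the KKT system, and derive a contradiction from the existence of a strictly better feasible point. I would first record the standard supporting inequality: differentiability and quasi-concavity of $f$ give, for all $x,y$ in the (convex) feasible region,
\begin{equation}
f(y)\ge f(x)\ \Longrightarrow\ \nabla f(x)^T(y-x)\ge 0. \nonumber
\end{equation}
The feasible set $\{x:\ x\ge 0\}$ is convex, so the whole segment $x^*+\theta(x-x^*)$, $\theta\in[0,1]$, remains feasible for any feasible $x$; this is what lets me apply the inequality above freely at the candidate point $x^*$.

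The key step, and the one where the hypothesis $\nabla f(x^*)\neq 0$ is indispensable, is the strict refinement
\begin{equation}
f(y)>f(x^*)\ \Longrightarrow\ \nabla f(x^*)^T(y-x^*)>0. \nonumber
\end{equation}
I would prove this by contradiction: the weak inequality already gives $\nabla f(x^*)^T(y-x^*)\ge 0$, so suppose it equals $0$. Since $\nabla f(x^*)\neq 0$, the direction $d:=\nabla f(x^*)$ satisfies $\nabla f(x^*)^Td=\norm{\nabla f(x^*)}^2>0$. Because $f$ is continuous and $f(y)>f(x^*)$ strictly, the perturbed point $y-\delta d$ still satisfies $f(y-\delta d)>f(x^*)$ for all sufficiently small $\delta>0$; applying the weak supporting inequality at $x^*$ to $y-\delta d$ then yields $\nabla f(x^*)^T(y-x^*)-\delta\norm{\nabla f(x^*)}^2\ge 0$, i.e. $-\delta\norm{\nabla f(x^*)}^2\ge 0$, which is impossible. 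Hence the directional derivative is strictly positive. I expect this one-sided perturbation --- using continuity of $f$ to stay above $f(x^*)$ while the nonzero gradient forces a strict decrease along $-d$ --- to be the crux and the main obstacle of the whole argument.

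Finally I would translate the KKT conditions into the opposing inequality. For the maximization of $f$ subject to $x\ge 0$, the KKT system furnishes multipliers $\mu^*\ge 0$ with $\nabla f(x^*)=-\mu^*$ and complementary slackness $\mu_i^*x_i^*=0$. These give $\nabla f(x^*)^Tx^*=-\sum_i\mu_i^*x_i^*=0$ and, for every feasible $x\ge 0$, $\nabla f(x^*)^Tx=-\sum_i\mu_i^*x_i\le 0$, whence $\nabla f(x^*)^T(x-x^*)\le 0$. Now assume, for contradiction, that some feasible $x$ satisfies $f(x)>f(x^*)$. The refinement of the previous paragraph forces $\nabla f(x^*)^T(x-x^*)>0$, contradicting $\nabla f(x^*)^T(x-x^*)\le 0$. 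Therefore no feasible point beats $x^*$, so $x^*$ is a global maximizer. (In the application to (\ref{4.2}) the extra equality constraint $\sum_i w_i=1$ is absorbed by carrying its free multiplier through the same bookkeeping: since $\sum_i x_i=\sum_i x_i^*=1$, that multiplier contributes nothing to $\nabla f(x^*)^T(x-x^*)$, and because $S$ is quasi-concave by Proposition \ref{quasiconcave}, the theorem applies directly.)
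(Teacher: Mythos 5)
First, a point of comparison: the paper itself gives no proof of this theorem. It is stated as a citation to Arrow and Enthoven \cite{Arrow1961} and used as a black box to certify global optimality of KKT points of (\ref{4.2}). So your proposal cannot be matched against an in-paper argument, only against the classical one --- and on that score your route is the standard one and is essentially correct: (i) the first-order characterization of quasi-concavity, $f(y)\ge f(x)\Rightarrow\nabla f(x)^T(y-x)\ge 0$; (ii) the upgrade to a strict inequality at $x^*$ when $\nabla f(x^*)\neq 0$ (precisely the pseudo-concavity-type property that makes KKT points sufficient); (iii) the KKT conditions in the paper's sign convention, $\nabla f(x^*)=-\mu^*$ with $\mu^*\ge 0$ and $\mu^{*T}x^*=0$, which give $\nabla f(x^*)^T(x-x^*)=-\mu^{*T}x\le 0$ for every feasible $x$, contradicting (ii).

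One step needs repair. In proving the strict refinement you apply the weak inequality to the pair $(x^*,\,y-\delta d)$ with $d=\nabla f(x^*)$, but you recorded that inequality only for points of the convex feasible region, and $y-\delta d$ need not be feasible: if $d$ has a positive component at a coordinate where $y_i=0$, then $y-\delta d$ leaves the orthant for every $\delta>0$. If $f$ is quasi-concave on all of $\setR^n$, or on an open convex set containing the orthant, this is harmless; but under the natural reading of the statement --- quasi-concavity on the feasible set only, which is also the situation for the Sharpe ratio in the paper, since the level sets $w^Te\ge t\sqrt{w^T\Sigma w}$ are convex only for $t\ge 0$, so $S$ is quasi-concave only on the cone where it is non-negative --- the step is unjustified as written. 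The fix is immediate and uses nothing beyond what you already have: invoke the KKT conditions \emph{before} proving the refinement. They give $d=\nabla f(x^*)=-\mu^*\le 0$ componentwise, so the perturbed point $y-\delta d=y+\delta\mu^*\ge y\ge 0$ is feasible for every $\delta>0$, and your perturbation argument then runs verbatim. With this reordering the proof is complete. It is also worth noting that your argument proves something slightly cleaner than what Arrow and Enthoven literally state: their sufficiency condition involving a nonvanishing gradient assumes twice differentiability, and the statement as quoted in the paper holds because, with only non-negativity constraints, a nonzero KKT gradient forces some strictly negative partial derivative, which is another of their alternative conditions; your direct argument bypasses that case analysis.
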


The KKT conditions for (\ref{4.2}), ignoring the equality constraint are as follows, where $\nabla S(w)=\frac{e}{\sqrt{w^T\Sigma w}}-\frac{w^Te\Sigma w}{(w^T\Sigma w)^{\frac{3}{2}}}$.

\begin{align*}
&\frac{e}{\sqrt{w^T\Sigma w}}-\frac{w^Te\Sigma w}{(w^T\Sigma w)^{\frac{3}{2}}}+\mu=0&&\text{ (stationarity)}\label{KKT}\tag{5.1}\\
&w\geq 0&&\text{ (primal feasibility)}\\
&\mu\geq 0&&\text{ (dual feasibility)}\\
&\mu^Tw=0&&\text{ (complementary slackness)}
\end{align*}

Consider the sets $P$ and $W$ defined by

$$P:= \{i\in\{1,\ldots, n\} :w_i>0\}$$

$$W:=\{i\in\{1,\ldots, n\} :w_i=0\}$$

Let us permute the data so that $$e=[e_P;e_W],\qquad w=[w_P;w_W],$$ and let $\Sigma_P$ be the covariance matrix of the instruments indexed by $P$. Let $|P|$ be the number of elements of $P.$ At optimality, the first $|P|$ rows of (\ref{KKT}) will equal $$e_P-\frac{w^T_Pe_P\Sigma_Pw_P}{w^T_P\Sigma_Pw_P}=0,$$ with solution $$w_P=c\Sigma^{-1}_Pe_P,\,\, \mbox{for}\,\, c\neq 0.$$ Therefore, the optimal solution of (\ref{4.2}) is the optimal solution of (\ref{4.1}) for some unknown subset of instruments $P$.
For ease in what follows, we will always take $c=1$.\\

Our main objective then is to find the optimal set $P$, after which the optimal solution can be found by solving a positive definite system of linear equations. We propose the use of the following active-set algorithm to solve (\ref{4.2}), which is inspired by Algorithm 16.3 in~\cite{Noc}.

\begin{algorithm}[H]                     
\caption{Sharpe Ratio active-set (SRAS) algorithm}             
\label{alg1}                           
\begin{algorithmic}[1]                 
    \STATE $i=0$
    \STATE $w^i=\mathbf{0}$
    \STATE $j=\argmax\frac{e_j}{\sqrt{\Sigma_{jj}}}$
    \STATE $w^i_j=\frac{e_j}{\sqrt{\Sigma_{jj}}}$
    \STATE $W^i=\{j | w^i_j=0\}$
    \STATE $P^i=\{j | w^i_j>0\}$
    \LOOP{}
    \STATE $x^i_{P^i}=\Sigma_{P^i}^{-1}e_{P^i}$
    \STATE $x^i_{W^i}=\mathbf{0}$
    \STATE $p^i=x^i-w^i$
    \IF{$p^i=0$}
    \STATE $\mu^i_{W^i}=\frac{w^{iT}e(\Sigma w^i)_{W^i}}{(w^{iT}\Sigma w^i)^{\frac{3}{2}}}-\frac{e_{W^i}}{\sqrt{w^{iT}\Sigma w^i}}$
    \IF{$\mu^i_j \geq 0 \text{ }\forall j\in W^i$}
    \STATE $w^*=\frac{w^i}{\sum_{j=1}^{n} w^i_j}$
    \STATE \textbf{quit}
    \ELSE
    \STATE $k^i=\argmin\limits_{j\in W^i} \mu^i_j$
    \STATE $W^{i+1}=W^i\setminus\{k^i\}$
    \STATE $P^{i+1}=P^i\cup\{k^i\}$
    \STATE $w^{i+1}=w^i$
    \ENDIF
    \ELSE
    \STATE $\alpha^i=\min\{1,\min\limits_{j \in P^i, p^i_j<0} \frac{-w^i_j}{p^i_j}\}$
    \IF{$\alpha^i<1$}
    \STATE $h^i=\argmin\limits_{j \in P^i, p^i_j<0} \frac{-w^i_j}{p^i_j}$
    \STATE $W^{i+1}=W^i\cup\{h^i\}$
    \STATE $P^{i+1}=P^i\setminus\{h^i\}$
    \ENDIF
    \STATE $w^{i+1}=w^i+\alpha^i p^i$
    \ENDIF
    \STATE $i=i+1$
    \ENDLOOP
\end{algorithmic}
\end{algorithm}

We find the portfolio consisting of a single instrument which maximizes the Sharpe ratio to initialize the algorithm in lines 1-6.
At iteration $i$, $x^i_{P^i}$ is set to maximize the Sharpe ratio, which in general is not feasible in (\ref{4.2}), in line 8. If the current feasible solution $w^i=x^i$, we check if $\mu^i_{W^i}\geq 0$. If so, then $$w^*=\frac{w^i}{\sum_{j=1}^{n} w^i_j}$$ is the optimal solution to (\ref{4.2}), or else we remove the index of the minimum value of dual variables $\mu^i$ from $W^i$ to form $W^{i+1}$ in lines 11-21. If $w^i\neq x^i$, $w^{i+1}$ is set by moving in the direction of $x^{i}$ from $w^{i}$ while remaining feasible in (\ref{4.2}). If $w^{i+1}\neq x^i$, the index of the first blocking constraint $j\in P^i$ is added to $W^i$ to create $W^{i+1}$ in lines 22-30.

\begin{theorem}\label{con}
The SRAS algorithm is convergent.
\end{theorem}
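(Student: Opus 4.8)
The plan is to show that the SRAS algorithm terminates after finitely many iterations at a point satisfying the KKT system (\ref{KKT}), whence the Arrow--Enthroven theorem guarantees global optimality. Two structural facts drive the argument. First, $S(w)=\frac{w^Te}{\sqrt{w^T\Sigma w}}$ is positively homogeneous of degree zero, so the normalisation $\sum_i w_i=1$ plays no role during the search and the algorithm may be analysed on the cone $\{w\ge 0\}$, the rescaling at line 14 being cosmetic. Second, by Proposition \ref{quasiconcave} the point $\Sigma_P^{-1}e_P$ is the \emph{unique} stationary point of $S$ on the coordinate subspace $\{w:w_j=0\ \forall j\notin P\}$, hence by quasi-concavity its unique maximiser up to positive scaling. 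I call an iteration \emph{major} when $p^i=0$, since there $w^i=x^i=\Sigma_{P^i}^{-1}e_{P^i}$ is exactly this subspace maximiser.

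First I would check that $S(w^i)$ is non-decreasing in $i$. A constraint-dropping step ($p^i=0$ with some $\mu^i_k<0$, lines 16--20) leaves $w^{i+1}=w^i$ and so preserves $S$. A moving step ($p^i\neq 0$) sets $w^{i+1}=(1-\alpha^i)w^i+\alpha^i x^i$ with $\alpha^i\in(0,1]$; as both $w^i$ and $x^i$ lie in the $P^i$-subspace on which $x^i$ maximises $S$, quasi-concavity yields $S(w^{i+1})\ge\min\{S(w^i),S(x^i)\}=S(w^i)$. Since every moving step keeps $w^{i+1}$ inside the subspace indexed by $P^{i+1}$, the subspace maximiser computed at the next iteration still dominates the current iterate.

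The crux is to promote this to a \emph{strict} increase across consecutive major iterations. Let $i$ be major, so $w^i$ is the subspace maximiser over $P^i$ and $\nabla S(w^i)$ vanishes on the $P^i$-coordinates. If the stopping test fails, some $\mu^i_k<0$; comparing the formula on line 12 with the stationarity equation (\ref{KKT}) shows $\mu^i_{W^i}=-\nabla S(w^i)_{W^i}$, so $\tfrac{\partial S}{\partial w_k}(w^i)>0$. The algorithm enlarges $P^{i+1}=P^i\cup\{k\}$ while keeping $w^{i+1}=w^i$ (with $w^{i+1}_k=0$). Writing $\phi(t)=S\big((1-t)w^{i+1}+t\,x^{i+1}\big)$ for the restriction of $S$ to the segment toward the new subspace maximiser $x^{i+1}$, quasi-concavity makes $\phi$ unimodal with maximum at $t=1$; as $w^{i+1}$ is not stationary in the $k$-direction it is not that maximiser, so $\phi(1)>\phi(0)$ and $\phi$ is non-decreasing, giving $\phi'(0)\ge 0$. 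Because $\nabla S(w^{i+1})$ vanishes on $P^i$, a direct computation gives $\phi'(0)=\tfrac{\partial S}{\partial w_k}(w^{i+1})\,x^{i+1}_k$; non-negativity then forces $x^{i+1}_k\ge 0$, and $x^{i+1}_k=0$ would place $x^{i+1}$ in the $P^i$-subspace and contradict $\phi(1)>\phi(0)$. Hence $x^{i+1}_k>0$ and $\phi'(0)>0$, so the ensuing moving step strictly increases $S$ and the next major iterate has Sharpe ratio strictly above $S(w^i)$.

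Finally I would assemble the counting argument. Each major iterate equals $\Sigma_P^{-1}e_P$ for some $P\subseteq\{1,\dots,n\}$, and the strict increase just established prevents any $P$ from recurring, so at most $2^n$ major iterations occur. Between consecutive major iterations only moving steps arise (else $p^i=0$ and the iteration would already be major): a step with $\alpha^i=1$ reaches the next subspace maximiser and hence the next major iterate, whereas a step with $\alpha^i<1$ deletes one index from $P^i$ (lines 25--27), strictly decreasing $|P^i|$. Because $|P^i|\le n$ and a one-element set yields an unblocked step, at most $n$ moving steps separate consecutive major iterations, so the total iteration count is finite; the loop exits at line 15, where $w^i$ satisfies stationarity, primal and dual feasibility, and complementary slackness, i.e.\ (\ref{KKT}). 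Since $\nabla S(w^*)\neq 0$ by our standing assumption, Arrow--Enthroven certifies that the returned portfolio is globally optimal. The step I expect to be hardest is the strict-increase claim, namely securing $\phi(1)>\phi(0)$ and $x^{i+1}_k>0$ even in degenerate configurations, which is exactly where the uniqueness of the subspace maximiser in Proposition \ref{quasiconcave} and the strict sign $\mu^i_k<0$ must be used in tandem.
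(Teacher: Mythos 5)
Your proof follows the same skeleton as the paper's (monotonicity from quasi-concavity, strict increase tied to the newly activated index $k$, then a counting argument), and your gradient-based derivation of $x^{i+1}_k>0$ --- via $\phi'(0)=\tfrac{\partial S}{\partial w_k}(w^{i+1})\,x^{i+1}_k\ge 0$ combined with $\mu^i_k<0$ and the uniqueness of the subspace maximiser --- is a clean conceptual substitute for the paper's Lemma \ref{xk}, which obtains the same fact by a Cholesky factorization of $\Sigma_{P^{i+1}}$, yielding $x^{i+1}_k=-\hat{\mu}^i_k\Sigma^{-1}_{kk}>0$. However, there is a genuine gap at exactly the point you flag as hardest: from $\phi'(0)>0$ you conclude that ``the ensuing moving step strictly increases $S$,'' and this can fail. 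The step length is $\alpha^{i+1}=\min\{1,\min_{j\in P^{i+1},\,p^{i+1}_j<0}-w^{i+1}_j/p^{i+1}_j\}$, and $P^{i+1}$ may contain indices $j$ with $w^{i+1}_j=0$ (left over, for instance, from ties at earlier blocking steps, since line 25 removes only one blocking index per iteration). If such a $j$ has $p^{i+1}_j<0$, then $\alpha^{i+1}=0$, so $w^{i+2}=w^{i+1}=w^i$ and no increase occurs; the iteration merely deletes $j$ from $P$. Your claim that consecutive major iterates have strictly increasing Sharpe ratios --- the fact underpinning ``no $P$ recurs, hence at most $2^n$ major iterations'' --- is therefore unproven as written. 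This degenerate case is precisely what the paper's Lemma \ref{xk} carves out (``unless there exists a $j\in P^i$ such that $w^i_j=0$ and $p^{i+1}_j<0$'') and what its main argument then disposes of by counting the $q$ zero-step iterations that remove such indices before a genuinely positive step occurs.

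The gap is fixable along your own lines, but it requires an additional argument that you do not supply: during a run of zero steps, (i) the added index $k$ is never the deleted blocking index, because your stationarity-plus-positive-partial argument continues to give $x_k>0$, hence $p_k>0$, on every intermediate index set (the iterate is unchanged and remains stationary on the surviving $P^i$-coordinates, while $\tfrac{\partial S}{\partial w_k}>0$ persists); and (ii) no intermediate iteration can be major, because the $k$-th stationarity row $e_k-(\Sigma w^i)_k>0$ stays violated. Since each zero step strictly shrinks $P$, only finitely many can occur, after which a step with $\alpha>0$ must take place, and only then does your unimodality argument deliver the strict inequality $S(w^{i'})>S(w^i)$ for the next major iterate. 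With that insertion your counting closes and your proof is complete (indeed it yields finite termination, slightly more than the paper's bounded-monotone-sequence conclusion); without it, the no-recurrence argument collapses in degenerate configurations.
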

\begin{proof}
See The Appendix
\end{proof}

There is in fact a quadratic convex reformulation of this problem, see \cite{Bien}, which has the following formulation under the mild condition that there exists at least one stock with $e_i>0$, where $z>0$ is a free constant.

\begin{alignat}{6}
&\min&&\text{ }w^T\Sigma w\label{QF}\tag{5.2}\\
&\mbox{s.t. }&&w^Te=z\nonumber\\
&&&w_i\geq 0\nonumber
\end{alignat}

After solving, the $w_i$ simply have to be normalized to sum to one to obtain the optimal solution. The choice of $z$ can affect solution quality, in particular when the number of instruments $n$ becomes large and the algorithm used to solve (\ref{QF}) employs a stopping criteria of the form $|w^i-w^{i+1}|\leq \text{tolerance}$. In practice we have found choosing $z=e^T1$, ensuring the average value of elements of $w^i$ equals 1, gives high quality solutions with virtually no optimality gap compared to the active-set algorithm, without having to alter default stopping criteria.

\section{Numerical Results}

A computational experiment was conducted where the SRAS algorithm was compared to Gurobi 7.0 using data derived from historical stock prices from two stock market indices. All computing was done using Matlab R2016a on a Windows 10 64-bit, AMD A8-7410 processor with 8 GB of RAM.\\

Six problems were used for testing. Historical stock prices of the Dow Jones Industrial Average and the S\&P/TSX 60 were used to calculate the expectation and covariance of instrument returns. For each index, the past year, 2 years and 5 years were used for estimation. This data was generated using the website InvestSpy \cite{IS}. Results are presented in Table \ref{T10} below. We observe that the mean computing time of SRAS is over an order of magnitude faster when compared to Gurobi. Another positive aspect of the SRAS algorithm is its relative simplicity compared to the quadratic programming reformulation, which is generally solved by using an interior point method.\\

\begin{table}[htb]
\centerline{
\resizebox{0.65\textwidth}{!}{
\begin{tabular}{lcccccc}
 &\multicolumn{2}{c}{SRAS} &\multicolumn{2}{c}{Gurobi}\\
\hline
 &Time (s)& Solution&Time (s)&Solution\\
  \hline
Dow 1 Yr &0.0386 &2.6769 &0.6881 &2.6769\\
Dow 2 Yr &0.0057 &3.3883 &0.5551 &3.3883\\
Dow 5 Yr & 0.0030 &15.7604&0.5829&15.7604\\
S\&P 1 Yr& 0.0479 &7.2073&0.6569 &7.2073\\
S\&P 2 Yr& 0.0095 &4.4550&0.6233 &4.4550\\
S\&P 5 Yr& 0.0053 &5.0557&0.5562 &5.0557\\
\hline
Mean     & 0.0184 & &0.6104 &
\end{tabular}}}
\caption{Numerical results} \label{T10}
\end{table}

\section{Model with Skewness}

We show numerically that the Omega measure is not equivalent to the Sharpe ratio for skewed distributions. Our estimation of Omega measure uses the following proposition.

\begin{proposition}\label{skew}
The Omega measure is equal to $\frac{\bar{\mu}-L}{\BE{\max(L-R,0)}}-1,$ i.e.,
$$  \Omega (R ) = \frac{\bar{\mu}-L}{\BE{\max(L-R,0)}}-1. $$
\end{proposition}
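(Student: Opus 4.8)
The plan is to recognize the two integrals defining $\Omega(R)$ as the expected upside and the expected downside of the portfolio relative to the benchmark $L$, and then to relate them through a single pointwise identity. First I would establish the two tail identities
\[
\int_L^{\infty}(1-F(x))\,dx = \BE{\max(R-L,0)}, \qquad \int_{-\infty}^{L}F(x)\,dx = \BE{\max(L-R,0)},
\]
each obtained by integration by parts, or equivalently by Fubini's theorem applied to the tail representations $\BE{\max(R-L,0)}=\int_L^{\infty}\BP{R>x}\,dx$ and $\BE{\max(L-R,0)}=\int_{-\infty}^{L}\BP{R<x}\,dx$. With these in hand, the Omega measure assumes the transparent form $\Omega(R)=\BE{\max(R-L,0)}/\BE{\max(L-R,0)}$, i.e. the ratio of expected gains to expected losses about $L$.

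Second, I would exploit the elementary identity $\max(R-L,0)-\max(L-R,0)=R-L$, which holds for every realization of $R$. Taking expectations gives $\BE{\max(R-L,0)}-\BE{\max(L-R,0)}=\bar{\mu}-L$, so the numerator equals the denominator plus $\bar{\mu}-L$. Dividing through by $\BE{\max(L-R,0)}$ then yields
\[
\Omega(R)=\frac{\BE{\max(L-R,0)}+(\bar{\mu}-L)}{\BE{\max(L-R,0)}}=\frac{\bar{\mu}-L}{\BE{\max(L-R,0)}}+1,
\]
which is the claimed expression up to the sign of the additive constant: the computation produces $+1$ rather than $-1$, so I would expect the stated $-1$ to be a typographical slip.

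The main obstacle — in fact the only nonroutine point — is justifying the tail identities, namely ensuring that the boundary terms $(L-x)F(x)$ as $x\to-\infty$ and $(x-L)(1-F(x))$ as $x\to\infty$ vanish and that both integrals converge. This is exactly where the standing assumption that $R\sim EC_1(\bar{\mu},\bar{\sigma};g)$ has a finite mean is used: $\BE{|R|}<\infty$ forces $xF(x)\to 0$ and $x(1-F(x))\to 0$ at the respective endpoints, legitimizing the integration by parts. Notably, no feature of the elliptical family beyond integrability of $R$ enters, so the representation holds for any return distribution with a finite first moment and with $\BE{\max(L-R,0)}>0$; this is precisely what makes $\Omega$ straightforward to estimate from samples in the skewed setting, since one need only average $\max(L-R,0)$ and the realized returns.
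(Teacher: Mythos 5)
Your proof is correct and follows essentially the paper's own route: the paper starts from its Fubini identity (equation (\ref{eq:omega}) in the proof of Theorem \ref{main}), which is exactly your representation $\Omega(R)=\BE{\max(R-L,0)}/\BE{\max(L-R,0)}$, and then implicitly uses the same pointwise identity $\max(R-L,0)-\max(L-R,0)=R-L$ to extract $\bar{\mu}-L$. Your sign diagnosis is also correct, and the slip is not only in the statement: the constant must be $+1$, and both the proposition and the final equality of the paper's proof carry the erroneous $-1$. A quick check confirms this: at zero skewness the skew-normal is exactly normal with $\mu=0.1$, $\sigma=0.3$, $L=0.01$, for which $\BE{\max(L-R,0)}\approx 0.080$, so the true Omega is $\approx 2.12$, while Figure \ref{T1} reports $\approx 0.124$; the plotted curve is therefore the true Omega shifted down by the constant $2$, which leaves the qualitative conclusion (preference for right skewness, non-equivalence with the Sharpe ratio) intact but makes the reported values themselves incorrect. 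A minor advantage of your argument: by deriving the tail identities via integration by parts (or the layer-cake formula) you never invoke the density $f$, so the representation $\Omega(R)=1+(\bar{\mu}-L)/\BE{\max(L-R,0)}$ holds for any return distribution with finite mean and positive expected shortfall, whereas the paper's derivation passes through the density-based equation (\ref{eq:omega}).
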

\begin{proof}
See The Appendix
\end{proof}

We consider the skew-normal distribution \cite{azzalini2005skew} which is closed under affine transformation and has probability distribution function

$$f(r)=\frac{2}{\omega}\phi(\frac{r-\epsilon}{\omega})\Phi(\alpha (\frac{r-\epsilon}{\omega})).$$

where $\phi(\cdot)$ and $\Phi(\cdot)$ are the standard normal probability distribution function and cumulative distribution function respectively, with location paramter $\epsilon$, scale $\omega$ and shape $\alpha$.\\

For a given skewness $\gamma_1$, let $$|\delta|=\sqrt{\frac{(\pi/2)|\gamma_1|^{2/3}}{((4-\pi)/2)^{2/3}+|\gamma_1|^{2/3}}},$$ where the sign of $\delta$ is chosen negative for left skewness and positive for right skewness. Given $\delta$,
$$\alpha=\frac{\delta}{\sqrt{1-\delta^2}},$$
and for a desired standard deviation,  $$\omega=\frac{\sigma}{\sqrt{1-2\delta^2/2}},$$ and mean, $$\epsilon=\mu-\omega\delta\sqrt{2/\pi}.$$\\

We plot the Omega measure for $L=0.01$, $\mu=0.1$ and $\sigma=0.3$, with $\gamma_1$ varying over the domain $[-0.99,0.99]$ in increments of 0.01. Monte Carlo integration was used to estimate $\BE{\max(L-R,0)}$ by taking 10 million samples of $R$ and taking the mean of $\max(L-R,0)$.

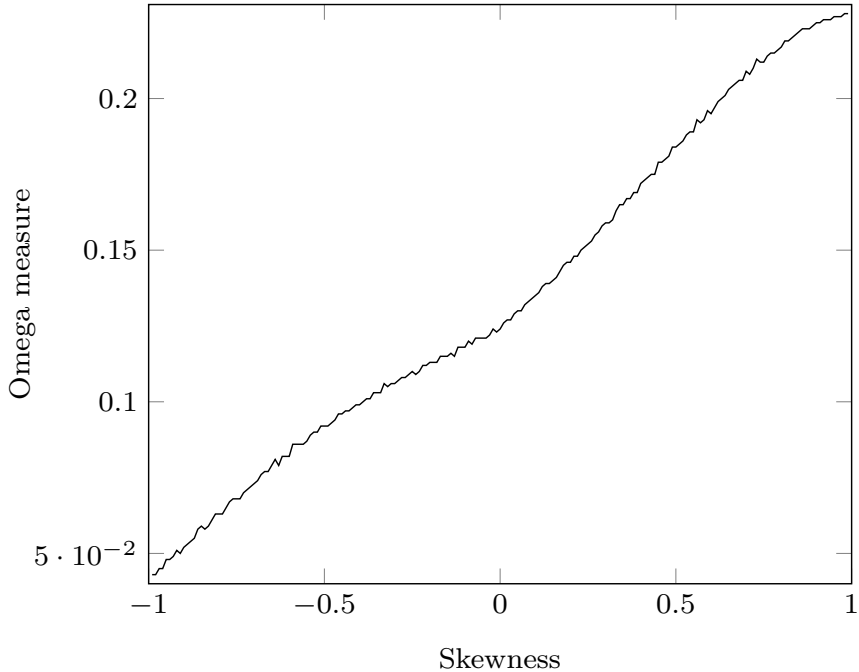
\begin{figure}[H]
 \centerline{
\resizebox{0.75\textwidth}{!}{
\begin{tikzpicture}
\scriptsize
\begin{axis}[xlabel=Skewness,ylabel=Omega measure, name=plot1,xmin=-1, xmax=1, ymin=0.04, ymax=0.231]
                \addplot[mark=none,mark size=1,draw=black]
                table[x=G,y=O]
                {result2.dat};
    \end{axis}
\end{tikzpicture}}}
\caption{Omega measure versus skewness for a skew-normal random variable with $\mu=0.1$, $\sigma=0.3$ and $L=0.01$.} \label{T1}
\end{figure}

Under the Sharpe ratio we are indifferent to all of the plotted portfolios, each having a Sharpe ratio $S(R)=0.3$, but under the Omega measure, taking into consideration higher moments, it is clear that we would prefer a portfolio with right skewness in this example.

\section{Conclusion and Future Work}
In this paper, we have proved the equivalence of the Omega measure and the Sharpe ratio under jointly elliptical distributions of returns. The portfolio optimization of the Sharpe ratio with and without short sales was numerically analyzed. An active-set algorithm was presented for markets prohibiting short sales, with an improvement in average solution time of over an order of magnitude when compared to standard optimization techniques. Numerical experiments show that when the return distributions are not symmetric the Omega measure and the Sharpe ratio are not equivalent. Future research could be done to develop optimization methods for the Omega measure under more general distribution assumptions such as the skew-elliptical distribution.

\bibliographystyle{plain}
\bibliography{OmegaOpt}

\section{Appendix}

\begin{proof} of Lemma \ref{L1}

\begin{align*}
R&=\frac{\Delta _{1}S_1(1)+\cdots +\Delta _{n}S_n(1)-(\Delta _{1}S_1(0)+\cdots +\Delta _{n}S_n(0))}{\Delta _{1}S_1(0)+\cdots +\Delta _{n}S_n(0)} \nonumber\\
&=\frac{\Delta _{1}S_1(1)-\Delta_{1}S_1(0)}{\Delta _{1}S_1(0)+\cdots +\Delta _{n}S_n(0)}+\cdots+\frac{\Delta _{n}S_n(1)-\Delta_{n}S_n(0) }{\Delta _{1}S_1(0)+\cdots +\Delta _{n}S_n(0)} \nonumber\\
&=S_1(1)-S_1(0)\left(\frac{\Delta _{1}}{\Delta _{1}S_1(0)+\cdots +\Delta _{n}S_n(0)}\right)+\cdots+S_n(1)\nonumber\\&-S_n(0)\left(\frac{\Delta _{n}}{\Delta _{1}S_1(0)+\cdots +\Delta _{n}S_n(0)}\right) \nonumber\\
&=\frac{S_1(1)-S_1(0)}{S_1(0)}\left(\frac{\Delta _1S_1(0)}{\Delta _1S_1(0)+\cdots +\Delta _nS_n(0)}\right)+\cdots \nonumber\\&+ \frac{S_n(1)-S_n(0)}{S_n(0)}\left(\frac{\Delta _nS_n(0)}{\Delta _1S_1(0)+\cdots+\Delta _nS_n(0)}\right) \nonumber\\
&= w_{1}R_{1}+\cdots+w_{n}R_{n}
\end{align*}
Thus the return of the portfolio R is a linear combination of $R_1,\cdots,R_n$. The closedness of elliptical distributions under linear combinations yields the claim.
\end{proof}

\begin{proof} of Theorem \ref{main}\\

Under our framework, we are now able to simplify the Omega measure. Recall the Omega measure is defined as
\begin{equation*}
\Omega (R ) = \frac{\int_{L}^{\infty}(1-F(x))dx}{\int_{-\infty }^{L}F(x)dx}.
\end{equation*}
Here $F(x)=\int_{-\infty}^{x}f(r)dr$ is the cumulative distribution function of the portfolio with arithmetic return R, with probability distribution function $f(r)$.
Thus
\begin{align}
\Omega(R)&=\frac{\int_{L}^{\infty }\left(1- \int_{-\infty}^{x}f(r)dr \right)dx}{\int_{-\infty}^{L}   \int_{-\infty}^{x}f(r)drdx}\nonumber \\
&=\frac{\int_{L}^{\infty }\int_{x}^{\infty }   f(r)dr  dx}{\int_{-\infty}^{L}\int_{-\infty}^{x }  f(r)dr dx}\nonumber
\end{align}
We use Fubini's theorem to change the order of integration. Let $D_{1}$ be the integration region of the integral in the numerator and let $D_{2}$ be the integration region of the integral in the denominator, then
\begin{equation}
D_{1} = \left \{(x,r)|x\in (L,\infty),r\in(x,\infty) \right \} =\left\{(x,r)|x\in(L,r),r\in(L,\infty)\right\} \nonumber
\end{equation}
and
\begin{equation}
D_{2} = \left \{(x,r)|x\in (-\infty,L),r\in(-\infty,x) \right \} =\left\{(x,r)|x\in(r,L),r\in(-\infty,L)\right\} \nonumber
\end{equation}

Thus, by Fubini's Theorem
\begin{equation}\label{eq:omega}
\Omega(R )=\frac{\int_{L}^{\infty }\int_{L}^{r } f(r) dxdr}{\int_{-\infty}^{L}\int_{r}^{L }        f(r) dxdr}
\end{equation}

Under elliptical distributions $$f(r)=\frac{1}{ \bar{\sigma }} g (\left(\frac{r-\bar{\mu }}{\bar{\sigma}} \right)^2 ).$$ Evaluating the upper integral gives us
\begin{align*}
\int_{L}^{\infty }\int_{L}^{r } \frac{1}{ \bar{\sigma }} g (\left(\frac{r-\bar{\mu }}{\bar{\sigma}} \right)^2 )    dxdr &= \frac{1}{ \bar{\sigma }}  \int_{L}^{\infty }(r-L) g (\left(\frac{r-\bar{\mu }}{\bar{\sigma}} \right)^2 )   dr \nonumber \\
&= \frac{1}{ \bar{\sigma }}  \int_{L}^{\infty }r g (\left(\frac{r-\bar{\mu }}{\bar{\sigma}} \right)^2 ) dr-\frac{L}{\bar{\sigma }}\int_{L}^{\infty } g (\left(\frac{r-\bar{\mu }}{\bar{\sigma}} \right)^2 ) dr. \nonumber\\
\intertext{Let us perform the change of variable. Therefore, we let $u=\frac{r-\bar{\mu }}{\bar{\sigma }}$, then $\bar{\sigma }du = dr$ and $r = \bar{\sigma }u+\bar{\mu}$}
\int_{L}^{\infty }\int_{L}^{r } \frac{1}{ \bar{\sigma }} g (\left(\frac{r-\bar{\mu }}{\bar{\sigma}} \right)^2 )  dxdr &=\int_{\frac{L-\bar{\mu}}{\bar{\sigma}}}^{\infty }(\bar{\sigma }u+\bar{\mu}) g(u^2)du-{L}\int_{\frac{L-\bar{\mu}}{\bar{\sigma}}}^{\infty } g(u^2)du \\
&=\bar{\sigma}\int_{\frac{L-\bar{\mu}}{\bar{\sigma}}}^{\infty }u g(u^2)du+(\bar{\mu}-L)\int_{\frac{L-\bar{\mu}}{\bar{\sigma}}}^{\infty } g(u^2)du
\end{align*}

Thus
$$
\int_{L}^{\infty }\int_{L}^{r }  \frac{1}{ \bar{\sigma }} g (\left(\frac{r-\bar{\mu }}{\bar{\sigma}} \right)^2 )   dxdr $$$$=\bar{\sigma}  \left[-\frac{1}{2} H_1  (\left(\frac{L-\bar{\mu }}{\bar{\sigma}} \right)^2 ) + \left(\frac{L-\bar{\mu }}{\bar{\sigma}} \right) H_2 \left(\frac{L-\bar{\mu }}{\bar{\sigma}} \right) -K  \left(\frac{L-\bar{\mu }}{\bar{\sigma}} \right)   \right],
$$
where
$$ H'_1(x)=g(x),\,\,\, H'_2(x)=g(x^2),\,\,\, K=\int_{-\infty}^ {\infty} g(x^2)\,dx.  $$
We use the same methodology for the lower integral to obtain

$$
 \int_{-\infty}^{L}\int_{r}^{L } \frac{1}{ \bar{\sigma }} g (\left(\frac{r-\bar{\mu }}{\bar{\sigma}} \right)^2 ) dxdr=\bar{\sigma}  \left[-\frac{1}{2} H_1  (\left(\frac{L-\bar{\mu }}{\bar{\sigma}} \right)^2 ) + \left(\frac{L-\bar{\mu }}{\bar{\sigma}} \right) H_2 \left(\frac{L-\bar{\mu }}{\bar{\sigma}} \right)     \right].
$$

Let $z=\frac{L-\bar{\mu}}{\bar{\sigma}}$, then
$$
\Omega(R) = G(z),
$$
where
$$
G(z)=1-\frac{K z}{  z H_2(z) -\frac{1}{2} H_1(z^2) }
$$
We claim that $\Omega(R)$ is a decreasing function of $z$. To see that, we first take the derivative of $G(z)$
$$
G'(z) =-\frac{1 }{2}\frac{K H_1(z^2)}{ \left( z H_2(z) -\frac{1}{2} H_1(z^2)  \right)^2 } \leq 0,
$$
since
$$H_1(x)=\int_{-\infty}^{x} g(u)\,du \geq 0, $$
due to the positivity of $g.$ Therefore $\Omega(R)$ is a decreasing function of $z.$ Hence
\begin{align*}
\underset{w_1,...,w_n}{\max} \Omega(R)
&\Leftrightarrow \underset{w_1,..,w_n}{\min}\frac{L-\bar{\mu}}{\bar{\sigma}}\\
&\Leftrightarrow \underset{w_1,..,w_n}{\max}\frac{\bar{\mu}-L}{\bar{\sigma}}\\
&\Leftrightarrow \underset{w_1,..,w_n}{\max}S(R)
\end{align*}
Therefore, maximizing the Omega measure over $\{w_1,...,w_n\}$ is equivalent to maximizing the Sharpe ratio over $\{w_1,...,w_n\}$ with risk-free interest rate $R_f$ equal to $L$.

\end{proof}

\begin{proof} of Proposition \ref{ecs}\\

The extended Cauchy-Schwarz inequality, see~\cite{johnson}, states that for vectors $b$ and $d$, and positive definite matrix $B$, $(b^Td)^2\leq (b^TBb)(d^TB^{-1}d)$ with equality if and only if $b=cB^{-1}d$ for any constant $c$. It follows that for the objective of (\ref{4.1}), $\frac{w^Te}{\sqrt{w^T\Sigma w}}\leq \sqrt{e^T\Sigma^{-1}e}$ for $w\neq 0$, with the maximum attained by $\hat{w}=\Sigma^{-1}e$. In order to satisfy the constraint $\sum_{i=1}^{n}w_i=1$, $\hat{w}$ is multiplied by the normalizing constant, $c=\frac{1}{\sum_i^n \hat{w}_i}$ to obtain the optimal solution to (\ref{4.1}).

\end{proof}

\begin{proof} of Proposition \ref{quasiconcave}\\

A function $f(x)$ is quasi-concave if its upper level sets $\{x | f(x)\geq t\}$ are convex. The upper level sets of $S(w)$ form second order conic constraints, $w^Te\geq t\sqrt{w^T\Sigma w}$, which define convex regions.\\

If $\nabla S(w)=\frac{e}{\sqrt{w^T\Sigma w}}-\frac{w^Te\Sigma w}{(w^T\Sigma w)^{\frac{3}{2}}} = 0$, then $w=c \Sigma^{-1}e$ for $c=\frac{w^T\Sigma w}{w^Te}$. Taking $w=c \Sigma^{-1}e$ for any  $c\neq 0$, it follows directly that $\nabla S(w)=0$.

\end{proof}

\begin{proof} of Theorem \ref{con}\\

\begin{lemma}\label{xk}
If $w^i=x^i$ but $w^i$ is suboptimal for (\ref{4.2}), then $\alpha^{i+1}>0$ in the next iteration, unless there exists a $j\in P^i$ such that $w^i_j=0$ and $p^{i+1}_j<0$.\\

\end{lemma}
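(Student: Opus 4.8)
The plan is to trace a single iteration of the SRAS algorithm past the state $w^i=x^i$ and pin down exactly when the next step length can vanish. First I would observe that $w^i=x^i$ forces $p^i=0$, so iteration $i$ enters the branch at line~11; since $w^i$ is suboptimal, the test $\mu^i_j\ge 0\ \forall j\in W^i$ fails, so $k^i=\argmin_{j\in W^i}\mu^i_j$ satisfies $\mu^i_{k^i}<0$. The update then sets $P^{i+1}=P^i\cup\{k^i\}$, $W^{i+1}=W^i\setminus\{k^i\}$, and $w^{i+1}=w^i$, so everything reduces to the sign pattern of $p^{i+1}=x^{i+1}-w^i$ on the enlarged set $P^{i+1}$.

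Next I would analyze $\alpha^{i+1}=\min\{1,\min_{j\in P^{i+1},\,p^{i+1}_j<0}\tfrac{-w^{i+1}_j}{p^{i+1}_j}\}$. Because $w^{i+1}=w^i\ge 0$, each candidate ratio $\tfrac{-w^i_j}{p^{i+1}_j}$ (with $p^{i+1}_j<0$) is nonnegative, and equals $0$ precisely when $w^i_j=0$. Hence $\alpha^{i+1}=0$ can occur only through some $j\in P^{i+1}$ with $p^{i+1}_j<0$ and $w^i_j=0$. The indices $j\in P^i$ of this type are exactly the exceptional case in the statement, so the whole content of the lemma is that the \emph{newly added} index $k^i$ cannot trigger a zero step, i.e. that $p^{i+1}_{k^i}=x^{i+1}_{k^i}\ge 0$ (note $w^i_{k^i}=0$ since $k^i\in W^i$).

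The crux is therefore to compute $x^{i+1}_{k^i}=(\Sigma_{P^{i+1}}^{-1}e_{P^{i+1}})_{k^i}$ and show it is positive. I would write $\Sigma_{P^{i+1}}$ as the bordered matrix with block $\Sigma_{P^i}$, border $b=\Sigma_{P^i,k^i}$, and corner $\sigma_{k^ik^i}$, and apply the Schur-complement formula to get $x^{i+1}_{k^i}=\frac{e_{k^i}-b^T\Sigma_{P^i}^{-1}e_{P^i}}{\sigma_{k^ik^i}-b^T\Sigma_{P^i}^{-1}b}$, whose denominator is the Schur complement of $\Sigma_{P^i}$ in the positive-definite $\Sigma_{P^{i+1}}$ and is strictly positive. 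Since $w^i_{P^i}=\Sigma_{P^i}^{-1}e_{P^i}$, the numerator equals $e_{k^i}-(\Sigma w^i)_{k^i}$. The key identity is that this same scaling (the algorithm's choice $c=1$) yields $w^{iT}e=w^{iT}\Sigma w^i=:V$, which collapses the dual formula on line~12 to $\mu^i_{k^i}=\frac{(\Sigma w^i)_{k^i}-e_{k^i}}{\sqrt V}$; hence the numerator of $x^{i+1}_{k^i}$ is $-\sqrt V\,\mu^i_{k^i}>0$, giving $x^{i+1}_{k^i}>0$.

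Finally I would assemble the conclusion: $p^{i+1}_{k^i}=x^{i+1}_{k^i}>0$, so in particular $p^{i+1}\ne 0$ and iteration $i+1$ does reach the line search, and $k^i$ never blocks. Thus the only possible source of $\alpha^{i+1}=0$ is an index $j\in P^i$ with $w^i_j=0$ and $p^{i+1}_j<0$; equivalently $\alpha^{i+1}>0$ unless such a $j$ exists, which is the claim. I expect the main obstacle to be the bookkeeping in the third step, namely deriving $w^{iT}e=w^{iT}\Sigma w^i$ from the fixed scaling and confirming positivity of the Schur complement, since the whole argument hinges on matching the sign of $x^{i+1}_{k^i}$ to the sign of $\mu^i_{k^i}$.
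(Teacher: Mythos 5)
Your proof is correct, and its skeleton matches the paper's: both arguments reduce the lemma to showing that the newly activated index $k^i$ satisfies $p^{i+1}_{k^i}>0$, both rest on the identity $w^{iT}e=w^{iT}\Sigma w^i$ (a consequence of the scaling $w^i_{P^i}=\Sigma_{P^i}^{-1}e_{P^i}$), which collapses the dual formula of line 12 to $\sqrt{w^{iT}\Sigma w^i}\,\mu^i_{k^i}=(\Sigma w^i)_{k^i}-e_{k^i}$, and both conclude by exhibiting $x^{i+1}_{k^i}$ as $-\mu^i_{k^i}$ times a strictly positive factor. Where you differ is the linear-algebraic device used to isolate $x^{i+1}_{k^i}$: the paper takes the Cholesky factorization $\Sigma_{P^{i+1}}=LL^T$ and compares the two triangular solves for $w^i_{P^{i+1}}$ (right-hand side $[e_{P^i};\,e_{k^i}+\hat\mu^i_{k^i}]$) and for $x^{i+1}_{P^{i+1}}$ (right-hand side $[e_{P^i};\,e_{k^i}]$), obtaining $x^{i+1}_{k^i}=-\hat\mu^i_{k^i}/L^2_{k^ik^i}$, whereas you apply block elimination to the bordered matrix and get
\begin{equation*}
x^{i+1}_{k^i}=\frac{e_{k^i}-b^T\Sigma_{P^i}^{-1}e_{P^i}}{\sigma_{k^ik^i}-b^T\Sigma_{P^i}^{-1}b}.
\end{equation*}
These are literally the same quantity, since $L^2_{k^ik^i}$ equals your Schur complement denominator, so the difference is presentational rather than substantive. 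What your route buys is self-containedness: no factorization is introduced, and positivity of the denominator follows at once from positive definiteness of the principal submatrix $\Sigma_{P^{i+1}}$. You are also more explicit than the paper about the reduction step, namely why only indices with $w^i_j=0$ can force $\alpha^{i+1}=0$ (all other blocking ratios are strictly positive), and why $p^{i+1}_{k^i}>0$ guarantees $p^{i+1}\neq 0$, so that iteration $i+1$ actually reaches the line search where $\alpha^{i+1}$ is defined; the paper leaves both points implicit.
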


\begin{proof} of Lemma \ref{xk}\\

Let $P^{i+1}=P^i\cup\{k\}$ for some $k\in W^i$. In the $i^{th}$ iteration, consider the rows of $P^{i+1}$ in (\ref{KKT}), $e_{P^{i+1}}-\frac{w^{iT}e\Sigma_{P^{i+1}}w^i_{P^{i+1}}}{w^{iT}\Sigma w^i}+\hat{\mu}^i_{P^{i+1}}=0$, where $\hat{\mu}^i=\sqrt{w^{iT}\Sigma w^i}\mu^i$. Given $w^i=x^i$, it follows that $w^{iT}e=w^{iT}\Sigma w^i$, and since $\hat{\mu}^i_{P^i}=\mathbf{0}$, we get $\Sigma_{P^{i+1}}w^i_{P^{i+1}}=\left[\begin{array}{c}e_{P^i}\\e_k+\hat{\mu}^i_k\\\end{array}\right]$. Taking the Cholesky decomposition, $\Sigma_{P^{i+1}}=LL^T$, we can write $L\left[\begin{array}{c}y^i_{P^i}\\y^i_k\\\end{array}\right]=\left[\begin{array}{c}e_{P^i}\\e_k+\hat{\mu}^i_k\\\end{array}\right]$ with $L^Tw^i_{P^{i+1}}=y^i$. Since $L^T$ is upper triangular, $L_{kk}w^i_k=y^i_k$, and so $y^i_k=0$. Considering now the $i+1^{th}$ iteration, $\Sigma_{P^{i+1}}x^{i+1}_{P^{i+1}}=\left[\begin{array}{c}e_{P^i}\\e_k\\\end{array}\right]$, and similarly if $L^Tx^{i+1}_{P^{i+1}}=y^{i+1}$ then $L\left[\begin{array}{c}y^{i+1}_{P^i}\\y^{i+1}_k\\\end{array}\right]=\left[\begin{array}{c}e_{P^i}\\e_k\\\end{array}\right]$. Since $L$ is lower triangular, $y^{i+1}_{P^i}=y^i_{P^i}$, $y^{i+1}_k=y^i_k-\frac{\hat{\mu}^i_k}{L_{kk}}$, and so $x^{i+1}_k=\frac{-\hat{\mu}^i_k}{L^2_{kk}}=-\hat{\mu}^i_k\Sigma^{-1}_{kk}$. As $w^i$ is not optimal, $\hat{\mu}^i_k<0$, so $p^{i+1}_k>0$. Assuming now $p^{i+1}_j\geq0$ for all $j\in P^i$ with $w^i_j=0$, $\alpha^{i+1}>0$.
\end{proof}
\vspace{12 pt}

\begin{lemma}\label{za}
If $w^i=x^i$ but $w^i$ is suboptimal for (\ref{4.2}), then $S(w^{i+2})>S(w^i)\text{ }\forall \alpha\in (0,1]$.\\
\end{lemma}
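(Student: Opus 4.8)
The plan is to reduce the two-step increase to a one-dimensional analysis of the Sharpe ratio along the segment traversed during iteration $i+1$. First I would record what the algorithm does: since $w^i=x^i$ is suboptimal, iteration $i$ takes the $p^i=0$ branch, chooses $k=k^i\in W^i$ with $\hat\mu^i_k<0$, sets $P^{i+1}=P^i\cup\{k\}$ and $w^{i+1}=w^i$; iteration $i+1$ then computes $x^{i+1}$ with $x^{i+1}_{P^{i+1}}=\Sigma_{P^{i+1}}^{-1}e_{P^{i+1}}$ and forms $w^{i+2}=w^{i+1}+\alpha\,p^{i+1}$ along $p^{i+1}=x^{i+1}-w^{i+1}$, with $\alpha=\alpha^{i+1}\in(0,1]$. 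By Lemma~\ref{xk}, $x^{i+1}_k=-\hat\mu^i_k\Sigma^{-1}_{kk}>0$, so $w^{i+1}_k=0<x^{i+1}_k$; thus both $w^{i+1}$ and $x^{i+1}$ are nonzero and supported on $P^{i+1}$, and $w^{i+1}$ is not a positive multiple of $x^{i+1}$.

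Next I would parametrize the segment by $w(\alpha)=(1-\alpha)w^{i+1}+\alpha x^{i+1}$ and set $\phi(\alpha):=S(w(\alpha))=\frac{a+b\alpha}{\sqrt{c+2d\alpha+f\alpha^2}}$, with $a=(w^{i+1})^Te$, $b=(p^{i+1})^Te$, $c=(w^{i+1})^T\Sigma w^{i+1}$, $d=(w^{i+1})^T\Sigma p^{i+1}$ and $f=(p^{i+1})^T\Sigma p^{i+1}$. Since $w(\alpha)_k=\alpha x^{i+1}_k>0$ for $\alpha\in(0,1]$, we have $w(\alpha)\neq 0$ and, by positive definiteness of $\Sigma$, the denominator stays strictly positive. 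The key computation is to differentiate $\phi$ and notice that the $\alpha^2$ terms cancel, leaving $\phi'(\alpha)=\dfrac{(bc-ad)+(bd-af)\alpha}{(c+2d\alpha+f\alpha^2)^{3/2}}$; hence the sign of $\phi'$ is controlled by a function affine in $\alpha$, so $\phi$ has at most one critical point on the line.

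I would then locate and classify that critical point. Because $x^{i+1}_{P^{i+1}}=\Sigma_{P^{i+1}}^{-1}e_{P^{i+1}}$, Proposition~\ref{quasiconcave} applied to the reduced data $(e_{P^{i+1}},\Sigma_{P^{i+1}})$ gives that the restricted gradient vanishes at $x^{i+1}$; since $p^{i+1}$ is supported on $P^{i+1}$, this yields $\phi'(1)=\nabla S(x^{i+1})^Tp^{i+1}=0$. Substituting $\alpha=1$ into the numerator forces $bc-ad=af-bd$, so the numerator equals $(af-bd)(1-\alpha)$ and $\alpha=1$ is the unique critical point of $\phi$. To decide whether it is a maximum I would compare endpoints: by the extended Cauchy--Schwarz inequality of Proposition~\ref{ecs}, $x^{i+1}$ is the unique maximizer (up to positive scaling) of $S$ over vectors supported on $P^{i+1}$, and since $w^{i+1}$ lies in that subspace but is not a positive multiple of $x^{i+1}$, we obtain the strict inequality $\phi(0)=S(w^{i+1})<S(x^{i+1})=\phi(1)$.

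Finally I would conclude. As $\phi(1)>\phi(0)$ and $\alpha=1$ is the only critical point, the constant case and the ``$\alpha=1$ is a minimum'' case are impossible, forcing $af-bd>0$; then $\phi'(\alpha)>0$ on $[0,1)$ and $\phi$ is strictly increasing on $[0,1]$. Consequently $S(w^{i+2})=\phi(\alpha)>\phi(0)=S(w^{i+1})=S(w^i)$ for every $\alpha\in(0,1]$, as claimed. I expect the main obstacle to be establishing strict monotonicity over the whole segment rather than at a single point: quasi-concavity alone gives only $S(w(\alpha))\ge\min(\phi(0),\phi(1))$, so the decisive ingredient is the elementary but non-obvious cancellation that makes the numerator of $\phi'$ affine in $\alpha$, producing a single critical point that the endpoint comparison then identifies as the maximum.
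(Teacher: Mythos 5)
Your proof is correct, and it takes a genuinely different route from the paper's. The paper proves the lemma by brute-force algebra: it expands the numerator and denominator of $S(w^i+\alpha(x^{i+1}-w^i))$ using the identities inherited from the proof of Lemma \ref{xk} (namely $\Sigma_{P^{i+1}}w^i_{P^{i+1}}=[e_{P^i};\,e_k+\hat{\mu}^i_k]$, $x^{i+1}_k=-\hat{\mu}^i_k\Sigma^{-1}_{kk}$, and $x^{i+1T}\Sigma x^{i+1}=x^{i+1T}e$), arriving at the closed form
\[
S(w^{i+2})=\frac{x^{i+1T}e+(1-\alpha)\hat{\mu}^i_kx^{i+1}_k}{\sqrt{x^{i+1T}\Sigma x^{i+1}+(1-\alpha^2)\hat{\mu}^i_kx^{i+1}_k}},
\]
and then exploits $\hat{\mu}^i_kx^{i+1}_k<0$ in a chain of explicit inequalities ending at $\sqrt{w^{iT}e}=S(w^i)$. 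You instead run a one-dimensional analysis of $\phi(\alpha)=S\bigl((1-\alpha)w^{i+1}+\alpha x^{i+1}\bigr)$: the $\alpha^2$ cancellation makes the numerator of $\phi'$ affine, the vanishing of the restricted gradient at $x^{i+1}$ (Proposition \ref{quasiconcave} on the reduced data) pins the unique critical point at $\alpha=1$, and the strict endpoint inequality $\phi(0)<\phi(1)$ from extended Cauchy--Schwarz (strict because $w^{i+1}_k=0<x^{i+1}_k$, so $w^{i+1}$ is not a positive multiple of $x^{i+1}$) forces the affine factor to be positive, giving strict monotonicity of $\phi$ on $[0,1]$. Your argument is more modular — it reuses Propositions \ref{ecs} and \ref{quasiconcave} instead of redoing Cholesky-based algebra — and it proves something slightly stronger, namely that $S$ increases strictly along the entire segment from $w^i$ to $x^{i+1}$, not merely that the endpoint beats $S(w^i)$; the paper's computation, in exchange, yields explicit quantitative lower bounds such as $S(w^{i+2})\geq\sqrt{x^{i+1T}e+(1-\alpha^2)\hat{\mu}^i_kx^{i+1}_k}$ that could support a rate analysis. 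Note that, like the paper, you rely on facts ($x^{i+1}_k>0$, $\hat{\mu}^i_k<0$) established inside the \emph{proof} of Lemma \ref{xk} rather than in its statement, and on the paper's standing non-degeneracy assumption that guarantees $e_{P^{i+1}}\neq 0$, so $S(x^{i+1})>0$; neither point is a gap relative to the paper's own standard of rigor.
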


\begin{proof} of Lemma \ref{za}\\
{
\begin{align*}
\label{6}
S(w^{i+2})&=S(w^{i+1}+\alpha(x^{i+1}-w^{i+1}))=S(w^i+\alpha(x^{i+1}-w^{i}))\\
&=\frac{(w^i+\alpha(x^{i+1}-w^{i}))^Te}{\sqrt{(w^i+\alpha(x^{i+1}-w^{i}))^T\Sigma(w^i+\alpha(x^{i+1}-w^{i}))}}\\
&=\frac{(w^i+\alpha(x^{i+1}-w^{i}))^Te}{\norm{\Sigma^\frac{1}{2}(w^i+\alpha(x^{i+1}-w^{i}))}_2}
\end{align*}

Focusing on the numerator,

\begin{align*}
(w^i+\alpha(x^{i+1}-w^{i}))^Te&=(1-\alpha)w^{iT}e+\alpha x^{i+1T}e\\
&=(1-\alpha)\left(\Sigma^{-1}_{P^{i+1}}\left[\begin{array}{c}e_{P^i}\\e_k+\hat{\mu}^i_k\\\end{array}\right]\right)^T\left[\begin{array}{c}e_{P^i}\\e_k\\\end{array}\right]+\alpha x^{i+1T}e\\
&=(1-\alpha)\left[\begin{array}{c}e_{P^i}\\e_k+\hat{\mu}^i_k\\\end{array}\right]^Tx^{i+1}_{P^{i+1}}+\alpha x^{i+1T}e\\
&=x^{i+1T}e+(1-\alpha)\hat{\mu}^i_kx^{i+1}_k
\end{align*}

Focusing on the denominator,

\begin{align*}
\norm{\Sigma^\frac{1}{2}(w^i+\alpha(x^{i+1}-w^{i}))}_2
&=\norm{\Sigma^\frac{1}{2}((1-\alpha)w^i+\alpha x^{i+1})}_2\\
&=\norm{\Sigma^\frac{1}{2}_{P^{i+1}}\left((1-\alpha)\Sigma^{-1}_{P^{i+1}}\left[\begin{array}{c}e_{P^i}\\e_k+\hat{\mu}^i_k\\\end{array}\right]+\alpha x^{i+1}_{P^{i+1}}\right)}_2\\
&=\norm{\Sigma^\frac{1}{2}_{P^{i+1}}\left(x^{i+1}_{P^{i+1}}+(1-\alpha)\Sigma^{-1}_{P^{i+1}}
\left[\begin{array}{c}\mathbf{0}\\\hat{\mu}^i_k\\\end{array}\right]\right)}_2\\
&=\sqrt{x^{i+1T}\Sigma x^{i+1}+2(1-\alpha)x^{i+1}_k\hat{\mu}^i_k+(1-\alpha)^2(\hat{\mu}^i_k)^2\Sigma^{-1}_{kk}}\\
&=\sqrt{x^{i+1T}\Sigma x^{i+1}+2(1-\alpha)x^{i+1}_k\hat{\mu}^i_k-(1-\alpha)^2x^{i+1}_k\hat{\mu}^i_k}\\
&=\sqrt{x^{i+1T}\Sigma x^{i+1}+(1-\alpha^2)\hat{\mu}^i_kx^{i+1}_k}\\
\end{align*}

Therefore,

\begin{align*}
S(w^{i+2})&=\frac{x^{i+1T}e+(1-\alpha)\hat{\mu}^i_kx^{i+1}_k}{\sqrt{x^{i+1T}\Sigma x^{i+1}+(1-\alpha^2)\hat{\mu}^i_kx^{i+1}_k}}\\
&\geq\frac{x^{i+1T}e+(1-\alpha^2)\hat{\mu}^i_kx^{i+1}_k}{\sqrt{x^{i+1T}e+(1-\alpha^2)\hat{\mu}^i_kx^{i+1}_k}}\\
&=\sqrt{x^{i+1T}e+(1-\alpha^2)\hat{\mu}^i_kx^{i+1}_k}\\
&>\sqrt{x^{i+1T}e+\hat{\mu}^i_kx^{i+1}_k}\\
&=\sqrt{w^{iT}e}=\frac{w^{iT}e}{\sqrt{w^{iT}e}}=\frac{w^{iT}e}{\sqrt{w^{iT}_{P^i}\Sigma_{P^i}\Sigma^{-1}_{P^i}e_{P^i}}}\\
&=\frac{w^{iT}e}{\sqrt{w^{iT}_{P^i}\Sigma_{P^i}w^i_{P^i}}}=\frac{w^{iT}e}{\sqrt{w^{iT}\Sigma w^i}}=S(w^i)
\end{align*}

}%
\end{proof}

The algorithm is monotone increasing as for all $i$, $S(x^i)\geq S(w^i)$, and by the quasi-concavity of $S(w)$, this implies that $S(w^i+\alpha(x^i-w^i))\geq S(w^i)$. If $w^0$ is not optimal $S(w^2)>S(w^0)$ by Lemmas \ref{xk} and \ref{za}, and since $S(w^0)\geq S(w)$ for all portfolios $w$ of size 1, $|P^i|\geq 2$ for all $i\geq 1$.\\

Assume now $x^i\neq w^i$, and this holds until $x^{i+m}=w^{i+m}$, where $m\leq n-2$ and $|P^{i+m}|\leq n-m$. If the solution is not optimal, there exists $q\leq n-m-2$ indices of $P^{i+m}$ such that $w^{i+m+1}_j=0$ and $p^{i+m+1}_j<0$.
After $q$ iterations, there exists no $j\in P^{i+m+1+q}$ such that $w^{i+m+1+q}_j=0$ and $p^{i+m+1+q}_j<0$, so by Lemma \ref{xk}, $\alpha^{i+m+1+q}>0$ and by Lemma \ref{za}, $S(w^{i+m+2+q})>S(w^{i+m+q})\geq S(w^i)$. Assuming $m=0$ considers the case where $x^i=w^i$, and so we have shown that the algorithm is strictly increasing after $m+2+q\leq n$ iterations.\\

Since the optimal value is bounded and the algorithm is strictly monotone increasing over intervals of $n$ iterations, the algorithm converges.
\end{proof}

\begin{proof} of Proposition \ref{skew}\\
{
Beginning from equation (\ref{eq:omega}) of the proof of Theorem \ref{main},
\begin{equation}
\Omega(R )=\frac{\int_{L}^{\infty }\int_{L}^{r } f(r)dxdr}{\int_{-\infty}^{L}\int_{r}^{L }        f(r)dxdr}=\frac{\int_{L}^{\infty }(r-L)f(r)dr}{\int_{-\infty}^{L}(L-r)f(r)dr}=
\frac{\BE{R}-L}{\BE{(L-R)^+}}-1.
\nonumber
\end{equation}
}
\end{proof}

\end{document}